\makeatletter \@addtoreset{equation}{section}
\newtheorem{thm}{Theorem}[section]
\newtheorem{lem}{Lemma}[section]
\newtheorem{exmp}{Example}[section]
\theoremstyle{definition}
\newtheorem{rem}{Remark}[section]
\begin{document}

\title{The Smallest Eigenvalue of Large Hankel Matrices}

\author[1]{{Mengkun Zhu}\footnote{Zhu\_mengkun@163.com}}
\author[1]{{Yang Chen}\footnote{yangbrookchen@yahoo.co.uk}}
\author[2]{{Niall Emmart}\footnote{nemmart@yrrid.com}}
\author[2]{{Charles Weems}\footnote{weems@cs.umass.edu}}

\affil[1]{Department of Mathematics, University of Macau,
Avenida da Universidade, Taipa, Macau, China}

\affil[2]{College of Information and Computer Sciences, \protect\\
University of Massachusetts, Amherst, MA 01003, USA}

\renewcommand\Authands{ and }

\maketitle

\begin{abstract}

We investigate the large $N$ behavior of the smallest eigenvalue, $\lambda_{N}$, of an $\left(N+1\right)\times \left(N+1\right)$ Hankel (or moments) matrix $\mathcal{H}_{N}$, generated by the weight  $w(x)=x^{\alpha}(1-x)^{\beta},~x\in[0,1],~ \alpha>-1,~\beta>-1$. By applying the arguments of Szeg\"{o}, Widom and Wilf, we establish the asymptotic formula for the orthonormal polynomials $P_{n}(z),z\in\mathbb{C}\setminus[0,1]$, associated with $w(x)$, which are required in the determination of $\lambda_{N}$.
Based on this formula, we produce the expressions for $\lambda_{N}$, for large $N$.

Using the parallel algorithm presented by Emmart, Chen and Weems, we show that the theoretical results are in close proximity to the numerical results for sufficiently large $N$.
\end{abstract}

{\bf Keyword}:
Asymptotics, Smallest eigenvalue, Hankel matrices, Orthogonal polynomials, Parallel algorithm

\section{Introduction}

Let $\mu(x)$ be a positive measure with the bounded support $I(\subseteq\mathbb{R})$ and define the moment sequence of $\mu(x)$ by
\begin{equation}\label{b2}
h_{k}:=\int_{I}x^{k}d\mu(x) ~~~~~~~k=0,1,2,\cdots.
\end{equation}

We know that the Hankel determinant plays a significant role in the theory of random Hermitian matrices. Associated with $\mu(x)$, the $\left(N+1\right)\times\left(N+1\right)$ Hankel matrix $\mathcal{H}_{N}$, is defined by
\begin{equation}\label{b1}
\mathcal{H}_{N}:=\left(h_{m+n}\right)_{m,n=0}^{N}.
\end{equation}

It is known that the smallest eigenvalue of the Hankel matrix is intimately related to the distribution function $\mu(x)$. We are motivated by the fact that the smallest eigenvalue depends $\mu'(x)$ in a non-trivial way.

\par
Let $I=[a,b]$, where $a$ and $b$ are fixed constants, such that the Szeg\"{o} condition,
\begin{equation}
\int_{a}^{b}\frac{\ln w(x)}{\sqrt{(b-x)(x-a)}}dx>-\infty,
\end{equation}
with $w(x)=\mu'(x)$ is satisfied. The asymptotic behavior of the Hankel determinants for large enough $N$ is given by Szeg\"{o} [\cite{C2}-\cite{C4}].

Let $\lambda_{N}$ denote the smallest eigenvalue of $\mathcal{H}_{N}$. The behavior of $\lambda_{N}, N$ large, has attracted a lot of attention. See e.g. Szeg\"{o} [\cite{C3}], Widom and Wilf [\cite{C6}], Chen \emph{et al.} [\cite{C9,C10,C13}], Berg \emph{et al.} [\cite{C13,C14}], etc. Szeg\"{o} [\cite{C3}] studied the special cases for $w(x)$, defined on $I$, which can either be a finite or infinite. For finite cases, if $w(x)=1,x\in(-1,1)$ or $w(x)=1,x\in(0,1)$, the smallest eigenvalues for large $N$ are given, respectively, by\footnote[1]{ Throughout this paper, the relation $a_{n}\simeq b_{n}$ means $\lim_{n\rightarrow\infty}a_{n}/b_{n}$=1.}

\begin{equation*}
~\lambda_{N}~\simeq~2^{\frac{9}{4}}\pi^{\frac{3}{2}}\sqrt{N}\left(\sqrt{2}+1\right)^{-2N-3},
\end{equation*}
\begin{equation*}
\lambda_{N}~\simeq~2^{\frac{15}{4}}\pi^{\frac{3}{2}}\sqrt{N}\left(\sqrt{2}+1\right)^{-4N-4}.
\end{equation*}
Widom and Wilf [\cite{C6}] found a kind of `universal' law, where they show that if $w(x)>0,x\in[a,b]$, and
the Szeg\"{o} condition (1.3) is satisfied, then
\begin{equation*}
\lambda_{N}~\simeq~ A\sqrt{N}B^{-N},
\end{equation*}
where $A$ and $B$ are computable constants depending on $w(x)$, $a$, $b$, and are independent of $N$.
\par
For cases of an infinite interval, Szeg\"{o} [\cite{C3}] chose the Gaussion weight ($w(x)={\rm e}^{-x^{2}},~x\in\mathbb{R}$) and Laguerre weight ($w(x)={\rm e}^{-x},~x\geq0$). The corresponding smallest eigenvalues are approximated, respectively, by
\begin{align*}
&\lambda_{N}~\simeq~{\rm e}2^{\frac{13}{4}}\pi^{\frac{3}{2}}N^{\frac{1}{4}}{\rm e}^{-2\sqrt{2N}}, \\
&\lambda_{N}~\simeq~{\rm e}2^{\frac{7}{2}}\pi^{\frac{3}{2}}N^{\frac{1}{4}}{\rm e}^{-4\sqrt{N}}.
\end{align*}

Chen and Lawrence generalized the results of Szeg\"{o} in [\cite{C3}]. By means of Dyson's Coulomb fluid method, they deduced the case for $w(x)={\rm e}^{-x^{\beta}}, ~x\in[0,+\infty), ~\beta>\frac{1}{2}$ and then gave two asymptotic formulas of $\lambda_{N}$ for $\beta=n+1/2$ and $\beta\neq n+1/2$, $n=1,2,3,\cdots$, respectively. See [\cite{C9}] for details.

We note that the smallest eigenvalues of the examples given above are exponentially small. Hence, it's hard to determine the smallest eigenvalues of the Hankel matrices associated with these weights by numerical techniques.

This paper is organized as follows, firstly, we establish the asymptotic formula for the orthonormal polynomials $P_{n}(z)$ associated with the weight $w(x)$ in Theorem 2.1. Then in Theorem 2.2, we give the specific asymptotic expression of $\lambda_{N}$. Finally, we present some numerical results compared with our theoretical results in section 3.

In order to meet the demands of some proofs in our results, we define the whole complex plane by $\mathbb{C}\cup\{\infty\}$, and the unit disc by
\begin{equation*}
D:=\{z\in\mathbb{C}\big||z|\leq1\},
\end{equation*}
with its boundary (unit circle)
\begin{equation*}
\partial D:=\{z\in\mathbb{C}\big||z|=1\}.
\end{equation*}

\section{Main results}
In this section, we shall produce the asymptotic expression for $\lambda_{N}$, the smallest eigenvalue of the $\left(N+1\right)\times \left(N+1\right)$ Hankel matrix $\mathcal{H}_{N}$. We consider the weight
\begin{equation}
w(x)=x^{\alpha}(1-x)^{\beta}, x\in[0,1], ~\alpha>-1,~\beta>-1,
\end{equation}
which satisfies
\begin{equation*}
\int_{0}^{1}\frac{\ln w(x)}{\sqrt{x(1-x)}}dx=-2\pi(\alpha+\beta)\ln2>-\infty.
\end{equation*}
The $N+1$ by $N+1$ Hankel matrix $\mathcal{H}_{N}$ is defined by
\begin{equation*}
\mathcal{H}_{N}:=\left(h_{m+n}\right)_{m,n=0}^{N},
\end{equation*}
where $h_{m+n}$ is the $(m+n)$th moment with respected to $w(x)$, reads
\begin{equation*}
h_{m+n}:=\int_{0}^{1}x^{m+n}w(x)dx=\int_{0}^{1}x^{m+n+\alpha}(1-x)^{\beta}dx,~ m,n=0,1,2,\cdots.
\end{equation*}

By the definition of the Gamma function
\begin{equation*}
\Gamma(x)=\int_{0}^{\infty}t^{x-1}{\rm e}^{-t}dt, ~~\Re~x>0,
\end{equation*}
and the Beta function
\begin{equation*}
B(x,y)=\int_{0}^{1}t^{x-1}(1-t)^{y-1}dt, ~~\Re~x>0, \Re~y>0,
\end{equation*}
with the relationship
\begin{equation*}
B(x,y)=\frac{\Gamma(x)\Gamma(y)}{\Gamma(x+y)},
\end{equation*}
we have
\begin{equation}
h_{m+n}=\frac{\Gamma(\alpha+m+n+1)\Gamma(\beta+1)}{\Gamma(\alpha+\beta+m+n+2)},~~m,n=0,1,2,\cdots,N.
\end{equation}

The Hankel matrix for $\alpha=\beta=0$ is the Hilbert matrix $\mathcal{H}_{N}=\left(\frac{1}{m+n+1}\right)_{m,n=0}^{N}$, for which some partial results were obtained in [\cite{C3}-\cite{C5}], [\cite{C6}] (in which the factor $-1/2\pi$ of Lemma 2 should be changed to $-1/4\pi$). The following two examples give $\mathcal{H}_{N}$ for some special choices of $\alpha$ and $\beta$.
\begin{exmp}
For $\alpha>-1$,and $\beta=0$, {\rm i.e.} $w(x)=x^{\alpha}, x\in[0,1]$, the Hankel matrix reads
\begin{equation*}
\mathcal{H}_{N}=\left(\frac{1}{1+m+n+\alpha}\right)_{m,n=0}^{N}.
\end{equation*}
\end{exmp}
\begin{exmp}
For $\alpha=\beta=-\frac{1}{2}$, {\rm i.e.} $w(x)=\frac{1}{\sqrt{x(1-x)}}, ~x\in[0,1]$, the Hankel matrix is given by
\begin{equation*}
\mathcal{H}_{N}=\left(\frac{\sqrt{\pi}\Gamma\left(\frac{1}{2}+m+n\right)}{\Gamma\left(1+m+n\right)}\right)_{m,n=0}^{N}.
\end{equation*}
\end{exmp}

For generic $\alpha$ and $\beta$, we shall show that there is an asymptotic formula for $\lambda_{N}$, the smallest eigenvalues of the Hankel matrices, in the following form
\begin{equation*}
\lambda_{N}\simeq\frac{8\sqrt{N}}{\psi(\alpha,\beta)\left(1+\sqrt{2}\right)^{4N+2}}~,
\end{equation*}
where
\begin{equation*}
\psi(\alpha,\beta)=2^{-\frac{3}{4}}\pi^{-\frac{3}{2}}\left(1+2^{\frac{1}{2}}\right)^{2\alpha+2}\left(1+2^{-\frac{1}{2}}\right)^{2\beta}.
\end{equation*}
See details in the proof of the Theorem \ref{th2}.

Let $\left\{P_{n}(x)\right\}_{n=0}^{\infty}$ be the orthonormal polynomials associated with our weight $w(x)$, i.e.,
\begin{equation*}
\int_{0}^{1}P_{m}(x)P_{n}(x)w(x)dx=\delta_{m,n},~~m,n=0,1,\cdots,N.
\end{equation*}

We define $P_{n}(x)$ and the $k$th moment $h_{k}$ of $w(x)$ to be
\begin{equation*}
P_{n}(x):=\sum_{k=0}^{n}a_{n,k}x^{k},
\end{equation*}
\begin{equation*}
h_{k}:=\int_{0}^{1}x^{k}w(x)dx.
\end{equation*}
Then, the orthogonality relation can be rewritten as
\begin{equation*}
\delta_{m,n}=\sum_{i,j=0}^{N}a_{m,i}h_{i+j}a_{n,j},~~m,n=0,1,\cdots,N,
\end{equation*}
which, in matrix form, reads
\begin{equation}\label{1}
\mathcal{I}=\mathcal{A}_{N}\mathcal{H}_{N}\mathcal{A}_{N}^{T}~,
\end{equation}
where
\begin{gather*}
\mathcal{A}_{N}:=\begin{bmatrix}
a_{0,0} & 0 & 0 & \cdots & 0 \\
a_{1,0} & a_{1,1} & 0 & \cdots & 0 \\
a_{2,0} & a_{2,1} & a_{2,2} & \cdots & 0 \\
\cdots & \cdots & \cdots & \cdots &\cdots \\
a_{N,0} & a_{N,1} & a_{N,2} & \cdots & a_{N,N} \\
\end{bmatrix}.\quad
\end{gather*}

From (\ref{1}), we find
\begin{equation*}
\mathcal{H}_{N}^{-1}=\mathcal{A}_{N}^{T}\left(\mathcal{A}_{N}\mathcal{A}_{N}^{T}\right)\left(\mathcal{A}_{N}^{T}\right)^{-1},
\end{equation*}
which shows that $\frac{1}{\lambda_{N}}$ is the largest eigenvalue of $\mathcal{A}_{N}\mathcal{A}_{N}^{T}$. This is not a new result, for more details, see [\cite{C3,C4,C6}]. Denoting by $\sigma_{m,n}$ the $(m,n)$th entry of $\mathcal{A}_{N}\mathcal{A}_{N}^{T}$, we have
\begin{equation*}
\sigma_{m,n}:=\left(\mathcal{A}_{N}\mathcal{A}_{N}^{T}\right)_{m,n}=\sum_{k=0}^{N}a_{m,k}a_{n,k}=\frac{1}{2\pi}\int_{0}^{2\pi}P_{m}\left({\rm e}^{{\rm i}\theta}\right)P_{n}\left({\rm e}^{-{\rm i}\theta}\right)d\theta.
\end{equation*}

We shall make use of $P_{n}(z)$ to study the behavior of $\sigma_{m,n}$ and thus of $\lambda_{N}$.
 Szeg\"{o} [\cite{C2}] has proved the case of the interval $[-1,1]$ if $\mu(x)$ is absolutely continuous and, Geronimus has proved that case in [\cite{C1}] for general $\mu(x)$. We can deduce the case $[0,1]$ by a linear transformation $\varphi: x\mapsto \frac{1}{2}(x+1), x\in [-1,1]$. Since if $\Phi_{n}(x)$ are the orthonormal polynomials associated with the weight $w(x)=\mu'\left(\varphi(x)\right)$ then $P_{n}(x)=\Phi_{n}(2x-1),x\in[0,1]$. Based on the discussion in [\cite{C1}, Thm.9.3], [\cite{C2}, Thm.12.1.2] and [\cite{C6}, Lem.2], etc., we obtain an asymptotic expression for $P_{n}(z),~z\in\mathbb{C}\setminus[0,1]$.
\begin{thm}\label{th1}
The asymptotic behavior of the orthonormal polynomials $P_{n}(z)$ with respect to the weight $w(x)=x^{\alpha}(1-x)^{\beta}, ~x\in[0,1],~ \alpha>-1,~\beta>-1$, uniformly for $z$ on compact subsets of $\mathbb{C}\setminus[0,1]$, satisfies
\begin{equation*}
P_{n}(z)\simeq\frac{1}{\sqrt{\pi}}\zeta^{n}(z)A(\zeta(z)),
\end{equation*}
Here
\begin{equation}\label{2.4}
\zeta(z)=\left(\sqrt{z}+\sqrt{z-1}\right)^{2},
\end{equation}
with the square roots taking the positive values as $\Re z\rightarrow\infty$, and noting that $|\zeta(z)|>1$ in $z\in\mathbb{C}\setminus[0,1]$, then $|A(\zeta(z))|$ is given by
\begin{equation}\label{2.5}
\left|A\left(r{\rm e}^{{\rm i}\theta}\right)\right |={ \exp\left[\frac{1}{4\pi}\int_{-\pi}^{\pi}\ln\left({\cos^{2\alpha}\frac{t}{2}}{\sin^{2\beta}\frac{t}{2}}|\sin t|\right)\frac{1-r^{2}}{1-2r\cos(\theta-t)+r^{2}}dt\right]}.
\end{equation}
\end{thm}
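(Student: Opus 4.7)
The strategy is to reduce the claim to the classical Szeg\"{o}--Geronimus asymptotic on the symmetric interval $[-1,1]$ and translate it back to $[0,1]$ via the affine change of variable $y=2x-1$. A direct substitution in the orthogonality relation shows that if $\Phi_n$ is orthonormal on $[-1,1]$ with respect to the pulled-back weight
\[
\tilde w(y)=\tfrac{1}{2}\,w\!\Bigl(\tfrac{1+y}{2}\Bigr)=2^{-\alpha-\beta-1}(1+y)^{\alpha}(1-y)^{\beta},
\]
then $P_n(x)=\Phi_n(2x-1)$ is orthonormal for $w$ on $[0,1]$, the factor $\tfrac{1}{2}$ absorbing the Jacobian $dy=2\,dx$. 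Since $\tilde w$ is, up to a constant, a Jacobi weight, it lies in the Szeg\"{o} class on $[-1,1]$ and the hypotheses of the classical theorem are satisfied.

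For $\Phi_n$ one has the statement of Szeg\"{o} [\cite{C2}, Thm.~12.1.2] and Geronimus [\cite{C1}, Thm.~9.3], in the form used by Widom--Wilf [\cite{C6}, Lem.~2]:
\[
\Phi_n(y)\simeq \frac{1}{\sqrt{\pi}}\,u(y)^{n}\,\tilde A(u(y)),\qquad u(y)=y+\sqrt{y^{2}-1},
\]
uniformly on compact subsets of $\mathbb{C}\setminus[-1,1]$, where $\tilde A$ is a non-vanishing holomorphic function on $|u|>1$ built from the Szeg\"{o} function of the weight attached to $\tilde w$ on the unit circle. Substituting $y=2z-1$ and invoking the elementary identity
\[
u(2z-1)=(2z-1)+2\sqrt{z(z-1)}=\bigl(\sqrt{z}+\sqrt{z-1}\bigr)^{2}=\zeta(z),
\]
one checks that the branch choices match (in particular $\zeta(z)\to\infty$ as $\Re z\to\infty$ and $|\zeta(z)|>1$ off $[0,1]$), so the first part of the theorem follows upon setting $A(\zeta):=\tilde A(\zeta)$.

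It remains to exhibit $|A|$ via its Poisson integral. A Szeg\"{o} function on the exterior of the unit disc has its modulus on $|u|=r>1$ given by a Poisson transform of the logarithm of the associated weight on $|u|=1$. Parametrizing $y=\cos t$, the measure $\tilde w(y)\,dy$ corresponds on the circle to the density $\tilde w(\cos t)\,|\sin t|$, and the identities $(1\pm\cos t)/2=\cos^{2}(t/2)$ or $\sin^{2}(t/2)$ give
\[
\tilde w(\cos t)\,|\sin t|=\tfrac{1}{2}\cos^{2\alpha}\!\tfrac{t}{2}\,\sin^{2\beta}\!\tfrac{t}{2}\,|\sin t|.
\]
Plugging this into the standard Poisson representation of the modulus of the Szeg\"{o} function produces precisely the integral in~(\ref{2.5}); the additive constant $\ln\tfrac{1}{2}$ contributes only a fixed multiplicative factor, once one uses that $\int_{-\pi}^{\pi}(1-r^{2})/(1-2r\cos\phi+r^{2})\,d\phi=-2\pi$ for $r>1$.

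The main obstacle is the bookkeeping of multiplicative constants: the Jacobian from $y=2x-1$, the factor $2^{-\alpha-\beta-1}$ from the Jacobi rescaling, the constants introduced by the substitution $y=\cos t$, and the value of the Poisson integral for $r>1$ must combine so that the ambient constant is exactly $1/\sqrt{\pi}$ and the integrand in~(\ref{2.5}) is precisely $\ln(\cos^{2\alpha}(t/2)\sin^{2\beta}(t/2)|\sin t|)$ with no residual $\ln 2$ term. Once these scalars are correctly tallied, the asymptotic on compact subsets of $\mathbb{C}\setminus[0,1]$ follows at once from the corresponding statement on $\mathbb{C}\setminus[-1,1]$.
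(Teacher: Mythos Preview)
Your proposal is correct and follows exactly the approach the paper indicates. The paper's own proof is the single line ``See [\cite{C6}, Lem.~2],'' with the affine reduction $x\mapsto\tfrac12(x+1)$ and the references to Szeg\H{o} [\cite{C2}, Thm.~12.1.2] and Geronimus [\cite{C1}, Thm.~9.3] stated in the paragraph immediately preceding the theorem; your write-up simply fleshes out that sketch, including the identity $u(2z-1)=(\sqrt{z}+\sqrt{z-1})^2=\zeta(z)$ and the Poisson representation of $|A|$, and is in fact more detailed than what the paper itself supplies.
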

\begin{proof}
See [\cite{C6}, Lem. 2].
\end{proof}
\begin{lem}\label{lem1}
The maximum of $\left|\zeta\left({\rm e}^{{\rm i}\theta}\right)\right|,$ for $0\leq \theta<2\pi,$
is attained at $\theta=\pi,$ and
\begin{equation*}
\max_{0\leq\theta<2\pi}\left|\zeta\left({\rm e}^{{\rm i}\theta}\right)\right|=\left(1+\sqrt{2}\right)^{2}~.
\end{equation*}
\end{lem}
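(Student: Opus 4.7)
The plan is to bound $|\zeta(e^{i\theta})|$ by working with the factored form $\zeta(z) = (\sqrt{z}+\sqrt{z-1})^{2}$ directly, applying a triangle inequality in modulus, and then verifying equality at $\theta = \pi$ by direct evaluation.

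For $z = e^{i\theta}$ with $\theta \in (0, 2\pi)$ one has $|\sqrt{z}| = \sqrt{|z|} = 1$ and $|\sqrt{z-1}| = \sqrt{|e^{i\theta}-1|} = \sqrt{2\sin(\theta/2)}$; the triangle inequality then gives $|\zeta(e^{i\theta})| = |\sqrt{z}+\sqrt{z-1}|^{2} \leq \bigl(1 + \sqrt{2\sin(\theta/2)}\bigr)^{2} \leq (1+\sqrt{2})^{2}$, the last inequality using $\sin(\theta/2) \leq 1$ with equality only at $\theta = \pi$. To show the bound is attained at $z = -1$, I would invoke the branch convention of Theorem \ref{th1} (the square roots are positive as $\Re z \to \infty$) and continue through the upper half-plane to obtain $\sqrt{-1} = i$ and $\sqrt{-2} = i\sqrt{2}$, giving $\zeta(-1) = (i + i\sqrt{2})^{2} = -(1+\sqrt{2})^{2}$ and hence $|\zeta(e^{i\pi})| = (1+\sqrt{2})^{2}$.

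The step requiring the most care is the tightness of the triangle inequality: it is an equality only when the two summands are complex-parallel. The direct computation above shows that at $z = -1$ both $\sqrt{z}$ and $\sqrt{z-1}$ have argument $\pi/2$ (and flipping the branch negates both simultaneously, leaving $|\zeta|$ unchanged), so the inequality is attained there. For every other $\theta \in (0, 2\pi)$ the second inequality in the chain above is strict, so the maximum is uniquely attained at $\theta = \pi$, with value $(1+\sqrt{2})^{2}$.
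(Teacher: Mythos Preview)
Your argument is correct. The triangle inequality gives the sharp upper bound $(1+\sqrt{2\sin(\theta/2)})^{2}$, and your direct evaluation at $z=-1$ confirms that the bound is attained there, so the maximum is $(1+\sqrt{2})^{2}$ at $\theta=\pi$.

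The paper takes a slightly different route: instead of bounding, it expands $|\zeta(z)|=|\sqrt{z}+\sqrt{z-1}|^{2}=|z|+|z-1|+2\Re\bigl(\sqrt{z}\,\overline{\sqrt{z-1}}\bigr)$ and, after substituting $z=e^{i\theta}$, obtains the explicit closed form
\[
\left|\zeta\bigl(e^{i\theta}\bigr)\right|=1+\sqrt{2-2\cos\theta}+\sqrt{2-2\cos\theta+2\sqrt{2-2\cos\theta}},
\]
which is visibly increasing in $2-2\cos\theta$ and hence maximized at $\theta=\pi$. Your approach is cleaner for the lemma itself, since it sidesteps computing the cross term exactly. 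The trade-off is that the paper's explicit formula is not just a means to this lemma: it is reused downstream (in the proof of Lemma~\ref{lem2} and in equation~(2.8)) to compute the second derivative of $|\zeta(e^{i\theta})|$ at $\theta=\pi$, which feeds into the Laplace-method constant. So if you adopt your proof here, you would still need to supply that formula or an equivalent calculation later.
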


\begin{proof}
From (\ref{2.4}), we have
\begin{equation*}
\left|\zeta(z)\right|=|z|+|z-1|+2\Re\left\{\sqrt{z\left(\overline{z}-1\right)}\right\},
\end{equation*}
substituting $z={\rm e}^{{\rm i}\theta}, \theta\in[0,2\pi]$, we find,
\begin{equation}\label{2.6}
\left|\zeta\left({\rm e}^{{\rm i}\theta}\right)\right|=1+\sqrt{2-2\cos\theta}+\sqrt{2-2\cos\theta+2\sqrt{2-2\cos\theta}},
\end{equation}
the upper bound for $\left|\zeta\left({\rm e}^{{\rm i}\theta}\right)\right|$ follows immediately.
\end{proof}

\begin{lem}\label{lem2}
The entries of $A_{N}A_{N}^{T}$, $\{\sigma_{m,n}\}_{m,n=0}^{N}$, have the following upper bound
\begin{equation*}
\left|\sigma_{m,n}\right|\leq C\cdot\frac{|1+\sqrt{2}|^{2(m+n)}}{\sqrt{m+n+1}}.
\end{equation*}
\end{lem}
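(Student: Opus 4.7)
The plan is to work from the integral representation
$$\sigma_{m,n}=\frac{1}{2\pi}\int_{0}^{2\pi}P_{m}\!\left(e^{{\rm i}\theta}\right)P_{n}\!\left(e^{-{\rm i}\theta}\right)d\theta,$$
substitute the asymptotics of Theorem \ref{th1}, and then harvest the $1/\sqrt{m+n+1}$ by Laplace's method at the unique maximum of $|\zeta(e^{{\rm i}\theta})|$ identified in Lemma \ref{lem1}. Because the coefficients $a_{n,k}$ are real, $P_{n}(e^{-{\rm i}\theta})=\overline{P_{n}(e^{{\rm i}\theta})}$, so
$$|\sigma_{m,n}|\le\frac{1}{2\pi}\int_{0}^{2\pi}\bigl|P_{m}(e^{{\rm i}\theta})\bigr|\,\bigl|P_{n}(e^{{\rm i}\theta})\bigr|\,d\theta.$$

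Next, I would apply Theorem \ref{th1} to each factor. For $\theta$ in any closed subinterval of $(0,2\pi)$, the point $e^{{\rm i}\theta}$ sits in a compact subset of $\mathbb{C}\setminus[0,1]$, so $|P_{k}(e^{{\rm i}\theta})|\le C_{1}\,|\zeta(e^{{\rm i}\theta})|^{k}\,|A(\zeta(e^{{\rm i}\theta}))|$ uniformly in $k$. Since $|\zeta(e^{{\rm i}\theta})|$ is bounded away from $1$ on such a set and the kernel in \eqref{2.5} shows $|A|$ is continuous in $(r,\theta)$, the factor $|A|^{2}$ can be absorbed into a constant, yielding
$$|\sigma_{m,n}|\le\frac{C_{2}}{2\pi}\int_{0}^{2\pi}\bigl|\zeta(e^{{\rm i}\theta})\bigr|^{m+n}d\theta\;+\;(\text{boundary contribution near }\theta=0).$$

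For the main integral I would invoke Laplace's method. From \eqref{2.6}, $|\zeta(e^{{\rm i}\theta})|$ depends only on $\cos\theta$, is even about $\theta=\pi$, and attains its unique maximum $(1+\sqrt{2})^{2}$ there (Lemma \ref{lem1}). A Taylor expansion of \eqref{2.6} at $\theta=\pi$ gives
$$\bigl|\zeta(e^{{\rm i}\theta})\bigr|=(1+\sqrt{2})^{2}-c_{0}(\theta-\pi)^{2}+O\!\bigl((\theta-\pi)^{4}\bigr),\qquad c_{0}=\tfrac{1}{4}+\tfrac{3\sqrt{2}}{8}>0,$$
so that $|\zeta(e^{{\rm i}\theta})|^{m+n}\approx(1+\sqrt{2})^{2(m+n)}\exp\bigl(-(m+n)\kappa(\theta-\pi)^{2}\bigr)$ with $\kappa=c_{0}/(1+\sqrt{2})^{2}$. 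A Gaussian integration then produces exactly the advertised shape
$$\int_{0}^{2\pi}\bigl|\zeta(e^{{\rm i}\theta})\bigr|^{m+n}d\theta\;\le\;C_{3}\,\frac{(1+\sqrt{2})^{2(m+n)}}{\sqrt{m+n+1}}.$$

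The main obstacle is that Theorem \ref{th1} is only stated uniformly on compact subsets of $\mathbb{C}\setminus[0,1]$, while the contour $|z|=1$ touches $[0,1]$ at $z=1$ (that is, $\theta=0\equiv2\pi$). I would handle this by splitting the integral at $|\theta|<\delta$ and $|\theta|\ge\delta$ for some small fixed $\delta>0$. On the outer region the argument above applies verbatim. On the inner region I would use the fact that $P_{n}$ is a Jacobi-type orthonormal polynomial on $[0,1]$ and hence $\sup_{|\theta|<\delta}|P_{n}(e^{{\rm i}\theta})|$ grows only polynomially in $n$ (it is comparable to $|P_{n}(1)|=O(n^{\alpha+1/2})$), so the inner contribution is exponentially dominated by $(1+\sqrt{2})^{2(m+n)}$ and can be absorbed into the constant $C$. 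Combining the two pieces gives the claimed bound.
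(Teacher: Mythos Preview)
Your overall architecture matches the paper's: split the circle at a neighborhood of $\theta=0$, apply Theorem~\ref{th1} on the remaining arc, and extract the $1/\sqrt{m+n+1}$ by Laplace's method at the unique maximum $\theta=\pi$ from Lemma~\ref{lem1}. That part is fine (modulo a harmless arithmetic slip in your $c_0$).

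The gap is in your treatment of the inner arc $|\theta|<\delta$. Your claim that $\sup_{|\theta|<\delta}|P_n(e^{{\rm i}\theta})|$ is polynomially bounded in $n$, ``comparable to $|P_n(1)|$'', is false for any \emph{fixed} $\delta>0$. Take any $\theta_0\in(0,\delta)$; then $e^{{\rm i}\theta_0}\in\mathbb{C}\setminus[0,1]$, and Theorem~\ref{th1} itself gives $|P_n(e^{{\rm i}\theta_0})|\sim \pi^{-1/2}|A(\zeta(e^{{\rm i}\theta_0}))|\,|\zeta(e^{{\rm i}\theta_0})|^{n}$ with $|\zeta(e^{{\rm i}\theta_0})|>1$, i.e.\ genuinely exponential growth. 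Knowing $|P_n(1)|$ on the real interval says nothing about $|P_n|$ at nearby \emph{complex} points once the degree is large. So the inner contribution cannot be dismissed as polynomial.

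What the paper does instead is a maximum-modulus trick that you are missing. It encloses the short arc $\{e^{{\rm i}\theta}:|\theta|\le\varepsilon\}$ in a small rectangle $R_\varepsilon$ with vertices $\cos\varepsilon\pm{\rm i}\sin\varepsilon$ and $1\pm{\rm i}\tan\varepsilon$. Since the orthogonal polynomials have only real zeros (Szeg\H{o}, Thm.~3.3.1), $P_m(z)P_n(z)$ attains its maximum modulus on $R_\varepsilon$ along the horizontal sides, which lie in a compact subset of $\mathbb{C}\setminus[0,1]$. There Theorem~\ref{th1} applies uniformly, giving
\[
\limsup_{m+n\to\infty}\max_{z\in R_\varepsilon}|P_m(z)P_n(z)|^{1/(m+n)}=\bigl|\zeta\bigl(e^{{\rm i}(\varepsilon+O(\varepsilon^2))}\bigr)\bigr|,
\]
which is close to $1$ and in particular strictly smaller than $(1+\sqrt{2})^2$. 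Hence the inner integral is $o\bigl(|\zeta(-1)|^{m+n}/\sqrt{m+n}\bigr)$ and is swallowed by the main term. Replacing your polynomial-bound step with this rectangle/real-zeros argument closes the gap; the rest of your plan then goes through exactly as in the paper.
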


\begin{proof}
Recall that
\begin{equation*}
\sigma_{m,n}=\frac{1}{2\pi}\int_{0}^{2\pi}P_{m}\left({\rm e}^{{\rm i}\theta}\right)P_{n}\left({\rm e}^{-{\rm i}\theta}\right)d\theta.
\end{equation*}

According to Theorem \ref{th1}, the asymptotic behavior of $\sigma_{m,n}$ depends on the factor $\zeta^{n}(z)$. Note that $\partial D\cap[0,1]=1$, so for any $\varepsilon>0$, we have
\begin{equation}\label{2.7}
\begin{split}
\left|\sigma_{m,n}\right|   &\leq \frac{1}{2\pi}\int_{-\varepsilon}^{\varepsilon}\left|P_{m}\left({\rm e}^{{\rm i}\theta}\right)P_{n}\left({\rm e}^{{\rm i}\theta}\right)
\right|d\theta+\frac{1}{2\pi}\int_{\varepsilon}^{2\pi-\varepsilon}\left|P_{m}\left({\rm e}^{{\rm i}\theta}\right)P_{n}\left({\rm e}^{{\rm i}\theta}\right)\right|d\theta\\
&\leq \frac{1}{2\pi}\int_{-\varepsilon}^{\varepsilon}\left|P_{m}\left({\rm e}^{{\rm i}\theta}\right)P_{n}\left({\rm e}^{{\rm i}\theta}\right)
\right|d\theta+ C_{0}\cdot\int_{\varepsilon}^{2\pi-\varepsilon}\left|\zeta\left({\rm e}^{{\rm i}\theta}\right)\right|^{m+n}d\theta\\
                  &= \frac{1}{2\pi}\int_{-\varepsilon}^{\varepsilon}\left|P_{m}\left({\rm e}^{{\rm i}\theta}\right)P_{n}\left({\rm e}^{{\rm i}\theta}\right)
\right|d\theta+ C_{0}\cdot\int_{\varepsilon}^{2\pi-\varepsilon}{\rm e}^{-(m+n)\left(-\ln\left|\zeta\left({\rm e}^{{\rm i}\theta}\right)\right|\right)}d\theta.
\end{split}
\end{equation}
Here, we have used $P_{n}\left({\rm e}^{{-\rm i}\theta}\right)=\overline{P_{n}\left({\rm e}^{{\rm i}\theta}\right)}$, and $C_{0}$ is a constant.

We first deal with the second integral of (\ref{2.7}). Since $\frac{d}{d\theta}\left|\zeta\left({\rm e}^{{\rm i}\theta}\right)\right|\Big|_{\theta=\pi}=0$, $\frac{d}{d\theta}\ln\left|\zeta\left({\rm e}^{{\rm i}\theta}\right)\right|\Big|_{\theta=\pi}=0$. Applying the formula (\ref{2.6}), we have
\begin{equation}\label{2.8}
\frac{d^{2}}{d\theta^{2}}\left|\zeta\left({\rm e}^{{\rm i}\theta}\right)\right|     =-\frac{1}{2}\sin\frac{\theta}{2}+\frac{1+\cos\theta\left(3+\csc\frac{\theta}{2}\right)}{4\sqrt{\sin^{2}\frac{\theta}{2}+\sin\frac{\theta}{2}}}
                                                     -\frac{\left(2+\sqrt{2}\right)\sin\theta\left(\sin\theta+\cos\frac{\theta}{2}\right)}{8\left(\sin^{2}\frac{\theta}{2}+\sin\frac{\theta}
                                                     {2}\right)^{\frac{3}{2}}},
\end{equation}
and it is immediate that,
\begin{equation*}
-\frac{d^{2}}{d\theta^{2}}\ln\left|\zeta\left({\rm e}^{{\rm i}\theta}\right)\right|\Big|_{\theta=\pi}=\frac{\sqrt{2}}{8}>0.
\end{equation*}
Applying the Laplace method when $m+n$ large enough, we get

\begin{equation*}
C_{0}\cdot\int_{\varepsilon}^{2\pi-\varepsilon}{\rm e}^{-(m+n)\left(-\ln\left|\zeta\left({\rm e}^{{\rm i}\theta}\right)\right|\right)}d\theta\leq C_{1}\cdot\frac{\left|\zeta\left(-1\right)\right|^{m+n}}{\sqrt{m+n}},
\end{equation*}
where $C_{1}$ is also a constant. So for all $m,n$ there is another suitable constant $C$ such that
\begin{equation}\label{2.9}
C_{1}\cdot\frac{\left|\zeta\left(-1\right)\right|^{m+n}}{\sqrt{m+n}}\leq C\cdot\frac{\left|\zeta\left(-1\right)\right|^{m+n}}{\sqrt{m+n+1}}, ~~m,n=0,1,2\cdots.
\end{equation}

 To estimate the first integral in (\ref{2.7}), let $R_{\varepsilon}$ be a rectangle with its four vertices $\cos\varepsilon\pm {{\rm i}}\sin\varepsilon$, $1\pm {{\rm i}}\tan\varepsilon$. The arc of $\partial D$ given by $|\theta|\leq\varepsilon$ is contained within $R_{\varepsilon}$. Applying the Theorem 3.3.1 of [\cite{C2}], the polynomials $P_{m}(z)P_{n}(z)$ has only real zeros, so its maximum absolute value on $R_{\varepsilon}$ must be attained on the horizontal sides of $R_{\varepsilon}$. Hence, according to the Theorem \ref{th1}, we have
\begin{equation*}
\limsup_{m+n\rightarrow+\infty}\max_{z\in R_{\varepsilon}}\left|P_{m}(z)P_{n}(z)\right|^{1/(m+n)}=\left|\zeta\left({\rm e}^{{{\rm i}}\left(\varepsilon+O\left(\varepsilon^{2}\right)\right)}\right)\right|.
\end{equation*}

So as $m+n\rightarrow\infty$,
\begin{equation*}
\int_{-\varepsilon}^{\varepsilon}\left|P_{m}\left({\rm e}^{{\rm i}\theta}\right)P_{n}\left({\rm e}^{{\rm i}\theta}\right)\right|d\theta=O\left(\left|\zeta\left({\rm e}^{{{\rm i}}\left(2\varepsilon\right)}\right)\right|^{m+n}\right),
\end{equation*}
since $\left|\zeta\left({\rm e}^{{{\rm i}}\left(2\varepsilon\right)}\right)\right|>\left|\zeta\left({\rm e}^{{{\rm i}}\left(\varepsilon+O\left(\varepsilon^{2}\right)\right)}\right)\right|$ if $\varepsilon\rightarrow0$. Hence, as $\varepsilon\rightarrow0$, we have
\begin{equation}\label{2.10}
\int_{-\varepsilon}^{\varepsilon}\left|P_{m}\left({\rm e}^{{\rm i}\theta}\right)P_{n}\left({\rm e}^{{\rm i}\theta}\right)\right|d\theta=o\left(\frac{\left|\zeta\left(-1\right)\right|^{m+n}}{\sqrt{m+n}}\right),
~m+n\rightarrow+\infty,
\end{equation}
since by Lemma 2.1, $\zeta\left({\rm e}^{{\rm i}\theta}\right)$ attains its maximum modulus at $\theta=\pi,$ and not at $\theta=0$. Thus the Lemma \ref{lem2} is proved based on formulas (\ref{2.7}), (\ref{2.9}) and (\ref{2.10}).
\end{proof}

\begin{rem}\label{20}
The Laplace method [\cite{C27}, p.96, Problem 201] gives,
\begin{equation*}
\int_{a}^{b}f(t){\rm e}^{-\lambda g(t)}dt\simeq {\rm e}^{-\lambda g(c)}f(c)\sqrt{\frac{2\pi}{\lambda g''(c)}}~,~~~~{\rm as}~~\lambda\rightarrow\infty,
\end{equation*}
where $g$ assumes a strict minimum over $[a,b]$ at an interior critical point $c$, such that
\begin{equation*}
\begin{split}
g'(c)=0,~~~~~g''(c)>0~~~~~{\rm and}~~~~~f(c)\neq0.\\
\end{split}
\end{equation*}
\end{rem}

\begin{lem}
For $z={\rm e}^{{\rm i}\pi}=-1$, we find
\begin{equation}
\begin{split}\label{2.24}
\left|A(\zeta(-1))\right|&=2^{-\frac{3}{4}}\left(1+2^{\frac{1}{2}}\right)^{\alpha+1}\left(1+2^{-\frac{1}{2}}\right)^{\beta}.
\end{split}
\end{equation}
\end{lem}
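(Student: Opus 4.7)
The plan is to substitute $z = -1$ directly into the integral representation (\ref{2.5}) and reduce it to elementary Poisson integrals on the unit disk. From (\ref{2.4}), taking $\sqrt{-1} = i$ and $\sqrt{-2} = i\sqrt{2}$ in accordance with the branch convention in Theorem \ref{th1}, one finds $\zeta(-1) = (i+i\sqrt{2})^{2} = -(1+\sqrt{2})^{2}$, so in (\ref{2.5}) one has $r = (1+\sqrt{2})^{2}$ and $\theta = \pi$. Since $r > 1$, setting $s := 1/r = (1+\sqrt{2})^{-2} = 3-2\sqrt{2}\in(0,1)$, a direct manipulation shows
\[
\frac{1-r^{2}}{1-2r\cos(\pi-t)+r^{2}} \;=\; -\,\frac{1-s^{2}}{1-2s\cos(\pi-t)+s^{2}} \;=\; -P_{s}(\pi-t),
\]
where $P_{s}$ is the standard Poisson kernel on the disk.

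I would next rewrite the integrand of (\ref{2.5}) in a Poisson-friendly form using the identities $|1+e^{it}| = 2\cos(t/2)$, $|1-e^{it}| = 2|\sin(t/2)|$, and $|\sin t| = \frac{1}{2}|1-e^{it}|\,|1+e^{it}|$ valid for $t\in(-\pi,\pi)$. These combine to give
\[
\ln\bigl(\cos^{2\alpha}(t/2)\sin^{2\beta}(t/2)|\sin t|\bigr) \;=\; (2\alpha+1)\ln|1+e^{it}| + (2\beta+1)\ln|1-e^{it}| - (2\alpha+2\beta+1)\ln 2.
\]
Since $1\pm z$ is analytic and nonvanishing on $|z|<1$, the functions $\ln|1\pm z|$ are harmonic there with $L^{1}$ boundary values, so the Poisson mean-value formula yields
\[
\frac{1}{2\pi}\int_{-\pi}^{\pi}\ln|1\pm e^{it}|\,P_{s}(\pi-t)\,dt \;=\; \ln|1\mp s|.
\]
Combined with the sign of the kernel and the normalization $\frac{1}{2\pi}\int_{-\pi}^{\pi}P_{s}(\pi-t)\,dt = 1$, this produces
\[
\ln|A(\zeta(-1))| \;=\; -\frac{1}{2}\bigl[(2\alpha+1)\ln|1-s| + (2\beta+1)\ln|1+s| - (2\alpha+2\beta+1)\ln 2\bigr].
\]

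To finish, I would use $1-s = 2(\sqrt{2}-1) = 2/(1+\sqrt{2})$ and $1+s = 2\sqrt{2}(\sqrt{2}-1) = 2\sqrt{2}/(1+\sqrt{2})$ to express $\ln|1\mp s|$ in terms of $\ln 2$ and $\ln(1+\sqrt{2})$. Substituting and collecting reduces $\ln|A(\zeta(-1))|$ to $(\alpha+\beta+1)\ln(1+\sqrt{2}) - \frac{2\beta+3}{4}\ln 2$, which by the identity $1+2^{-1/2} = (1+2^{1/2})/\sqrt{2}$ equals $(\alpha+1)\ln(1+2^{1/2}) + \beta\ln(1+2^{-1/2}) - \frac{3}{4}\ln 2$, i.e.\ (\ref{2.24}). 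The main subtlety is the sign reversal of the Poisson-type kernel when $r>1$; once this is in place, the rest is two applications of the disk Poisson representation to the explicit harmonic functions $\ln|1\pm z|$, followed by careful bookkeeping of the powers of $2$ and of the conjugate factors $\sqrt{2}\pm 1$ needed to match the precise form asserted in the lemma.
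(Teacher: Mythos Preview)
Your proof is correct and takes a genuinely different route from the paper's. The paper splits the integral (\ref{2.5}) into three pieces---a constant piece, a $\ln\cos(t/2)$ piece, and a $\ln\sin(t/2)$ piece---and evaluates the first by the residue theorem and the other two by real substitutions ($x=\cos t$, $y=(1+x)/2$) that reduce them to tabulated integrals of the form $\int_{0}^{1}\frac{\ln(1-y)}{(y+t)\sqrt{y(1-y)}}\,dy$, quoted in the Appendix. Your argument instead observes that inverting $r\mapsto s=1/r$ turns the kernel in (\ref{2.5}) into $-P_{s}(\pi-t)$ with $0<s<1$, rewrites the logarithmic factor as a linear combination of $\ln|1\pm e^{it}|$, and then reads off the answer from the Poisson representation of the harmonic functions $\ln|1\pm z|$ at the interior point $-s$. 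This is more conceptual and considerably shorter: it handles all three of the paper's pieces at once and requires no external integral identities, only the classical fact that $\ln|1\pm z|\in h^{1}(D)$ recovers its boundary values via the Poisson kernel. The paper's computation, by contrast, is more hands-on and yields the individual values of the $\cos$ and $\sin$ integrals separately, which the authors later exploit in Remark~3.2 to explain the elliptical shape of the error contours.
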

\begin{proof}
From (\ref{2.5}), we obtain
\begin{equation*}
\begin{split}
\ln\left|A(\zeta(-1))\right| &=\frac{1}{4\pi}\int_{-\pi}^{\pi}\ln\left({\cos^{2\alpha}\frac{t}{2}}{\sin^{2\beta}\frac{t}{2}}|\sin t|\right)\frac{1-\eta^{2}}{1-2\eta\cos(\pi-t)+\eta^{2}}dt\\
&=\frac{\ln2}{2\pi}\int_{0}^{\pi}\frac{1-\eta^{2}}{1+2\eta\cos t+\eta^{2}}dt~~~~~~~~~~~~~~~~~~~~~~~~~~~~~~~~~~~~~~~~~~~~~~(~a~)\\
&+\frac{2\alpha+1}{2\pi}\int_{0}^{\pi}\ln\left(\cos\frac{t}{2}\right)\frac{1-\eta^{2}}{1+2\eta\cos t+\eta^{2}}dt~~~~~~~~~~~~~~~~~~~~~~~~~~~~(~b~)\\
&+\frac{2\beta+1}{2\pi}\int_{0}^{\pi}\ln\left(\sin\frac{t}{2}\right)\frac{1-\eta^{2}}{1+2\eta\cos t+\eta^{2}}dt,~~~~~~~~~~~~~~~~~~~~~~~~~~~~(~c~)
\end{split}
\end{equation*}
where $\eta:=\left(1+\sqrt{2}\right)^2.$
Note that if $\alpha=\beta=-\frac{1}{2}$, then (b) and (c) are zero, this is Example 2.2.
Applying the Residue theorem, the integral $(a)$ can be rewritten as
\begin{equation}\label{2.18}
\frac{\ln2}{4\pi {\rm i}}\int_{|z|=1}\frac{1-\left(1+\sqrt{2}\right)^{4}}{\left(1+\sqrt{2}\right)^{2}\left[z+\left(1+\sqrt{2}\right)^{-2}\right]
\left[z+\left(1+\sqrt{2}\right)^{2}\right]}dz=-\frac{\ln2}{2}.
\end{equation}
For the integral $(b)$, we have
\begin{equation}\label{2.26}
\begin{split}
&\frac{1}{2\pi}\int_{0}^{\pi}\ln\left(\cos\frac{t}{2}\right)\frac{1-\eta^{2}}{1+2\eta\cos t+\eta^{2}}dt\\
&=\frac{1-\eta^{2}}{4\pi}\int_{0}^{\pi}\frac{\ln\left(\frac{1+\cos t}{2}\right)}{1+2\eta\cos t+\eta^{2}}dt~~~~~~~~~~~~~~~~~~~~~~~~~~~ (x:=\cos t)\\
&=\frac{1-\eta^{2}}{4\pi}\int_{-1}^{1}\frac{\ln\left(\frac{1+x}{2}\right)}{1+2\eta x+\eta^{2}}\frac{dx}{\sqrt{1-x^{2}}}~~~~~~~~~~~~~~~~~~~~~(y:=\frac{1+x}{2})\\
&=\frac{1-\eta^{2}}{4\pi}\int_{0}^{1}\frac{\ln y}{1+2\eta(2y-1)+\eta^{2}}\frac{dy}{\sqrt{y(1-y)}}~~~~~(\eta=(1+\sqrt{2})^{2})\\
&=\frac{\sqrt{2}}{4\pi}\int_{0}^{1}\frac{\ln (1-y)}{\sqrt{y(1-y)}(y-2)}dy~~~~~~~~~~~~({\rm See~the~ Appendix})\\
&=\frac{\ln\left(1+\sqrt{2}\right)}{2}~.
\end{split}
\end{equation}
For the integral $(c)$, we find
\begin{equation}\label{2.27}
\begin{split}
&\frac{1}{2\pi}\int_{0}^{\pi}\ln\left(\sin\frac{t}{2}\right)\frac{1-\eta^{2}}{1+2\eta\cos t+\eta^{2}}dt\\
&=-\frac{\sqrt{2}}{4\pi}\int_{0}^{1}\frac{\ln (1-y)}{\sqrt{y(1-y)}(y+1)}dy\\
&=\frac{\ln\left(1+\sqrt{2}\right)}{2}-\frac{\ln2}{4}~.
\end{split}
\end{equation}

So from (\ref{2.18}), (\ref{2.26}) and (\ref{2.27}), we have
\begin{equation*}
\begin{split}
\left|A(\zeta(-1))\right|&=\exp\left[-\frac{\ln2}{2}+\frac{(2\alpha+1)\ln\left(1+\sqrt{2}\right)}{2}+\frac{(2\beta+1)\ln\frac{\left(1+\sqrt{2}\right)^{2}}{2}}{4}\right]\\
&=2^{-\frac{3}{4}}\left(1+2^{\frac{1}{2}}\right)^{\alpha+1}\left(1+2^{-\frac{1}{2}}\right)^{\beta}.
\end{split}
\end{equation*}
\end{proof}

\begin{thm}\label{th2}
For $\lambda_{N}$, we have
\begin{equation*}
\lambda_{N}\simeq 2^{\frac{15}{4}}\pi^{\frac{3}{2}}\left(1+2^{\frac{1}{2}}\right)^{-2\alpha}\left(1+2^{-\frac{1}{2}}\right)^{-2\beta}
N^{\frac{1}{2}}\left(1+2^{\frac{1}{2}}\right)^{-4\left(N+1\right)}
 .
\end{equation*}
\end{thm}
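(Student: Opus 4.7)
The plan is to exploit the identification $1/\lambda_N=\lambda_{\max}(\mathcal{A}_N\mathcal{A}_N^{T})$ and to analyse the quadratic form
\[
v^{\top}\mathcal{A}_N\mathcal{A}_N^{T}v=\frac{1}{2\pi}\int_0^{2\pi}\Bigl|\sum_{n=0}^{N}v_n P_n(e^{i\theta})\Bigr|^{2}d\theta
\]
through Theorem \ref{th1}. The first step is to upgrade the proof of Lemma \ref{lem2} into a two-sided Laplace asymptotic for $\sigma_{m,n}$: substituting $P_n(e^{\pm i\theta})\simeq \pi^{-1/2}\zeta(e^{\pm i\theta})^n A(\zeta(e^{\pm i\theta}))$ and applying the Laplace method at the critical point $\theta=\pi$ identified in Lemma \ref{lem1}, with curvature $\sqrt{2}/8$ drawn from (\ref{2.8}), produces
\[
\sigma_{m,n}\simeq \frac{2^{3/4}\,|A(\zeta(-1))|^{2}\,\zeta(-1)^{m+n}}{\pi^{3/2}\sqrt{m+n}}\cdot G(m-n,m+n),
\]
where $G$ is a Gaussian in $(m-n)^{2}/(m+n)$ that equals $1$ on the diagonal.

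The lower bound is produced by testing against $v_n=\rho^n$ with $\rho:=\zeta(-1)=(1+\sqrt{2})^{2}$, so that $\|v\|^{2}=S_N:=\sum_{k=0}^{N}\rho^{2k}\sim\rho^{2N+2}/(\rho^{2}-1)$. Interchanging sum and integral inside the numerator and summing the geometric series yields
\[
v^{\top}\mathcal{A}_N\mathcal{A}_N^{T}v\simeq\frac{|A(\rho)|^{2}}{2\pi^{2}}\int_{0}^{2\pi}\left|\frac{(\rho\zeta(e^{i\theta}))^{N+1}-1}{\rho\zeta(e^{i\theta})-1}\right|^{2}d\theta,
\]
and a Laplace evaluation around $\theta=\pi$ gives the asymptotic $|A(\rho)|^{2}\rho^{4N+4}\cdot 2^{5/4}\sqrt{\pi/(N+1)}\,/\,\bigl(2\pi^{2}(\rho^{2}-1)^{2}\bigr)$. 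Dividing by $S_N$ and using the algebraic identity $\rho\cdot 2^{5/2}=\rho^{2}-1$ (equivalently $(1+\sqrt{2})^{2}\cdot 4\sqrt{2}=(1+\sqrt{2})^{4}-1$) reduces the Rayleigh quotient to $|A(\rho)|^{2}\rho^{2N+1}/\bigl(2^{9/4}\pi^{3/2}\sqrt{N}\bigr)$; substituting the evaluation of $|A(\zeta(-1))|$ furnished by the lemma immediately preceding Theorem \ref{th2} converts this into the reciprocal of the stated expression.

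For the matching upper bound the plan is to rewrite, via Theorem \ref{th1}, the Rayleigh quotient for an arbitrary unit vector as
\[
v^{\top}\mathcal{A}_N\mathcal{A}_N^{T}v\simeq\frac{1}{2\pi^{2}}\int_0^{2\pi}|A(\zeta(e^{i\theta}))|^{2}\,|V(\zeta(e^{i\theta}))|^{2}\,d\theta,\qquad V(\zeta):=\sum_{n=0}^{N}v_n\zeta^{n},
\]
turning $\sup_{\|v\|=1}$ into a Szeg\H{o}-type extremal problem on the image of $\partial D$ under $\zeta$. The strict maximum of $|\zeta(e^{i\theta})|$ at $\theta=\pi$ (Lemma \ref{lem1}), together with the smoothness and non-vanishing of $|A|$ at $\zeta=\rho$, forces any near-maximiser to concentrate there; the Cauchy--Schwarz bound $|V(\rho)|^{2}\le\|v\|^{2}\sum_{n=0}^{N}\rho^{2n}$, which is saturated precisely by $v_n\propto\rho^n$, then matches the lower bound, while the contribution from $\theta$ bounded away from $\pi$ is absorbed by the tail estimate already proved in Lemma \ref{lem2}.

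The hard part will be making this upper bound rigorous: a priori a more oscillatory $v$ could try to exploit the phase of $A(\zeta)V(\zeta)$ along the curve $\zeta(\partial D)$, and ruling this out requires carefully combining the strict peak of $|\zeta(e^{i\theta})|$ at $\theta=\pi$ with the positivity and smoothness of $|A|^{2}$ there, together with uniform tail control from Lemma \ref{lem2}. This is precisely the mechanism of the Widom--Wilf "universal law", and a complete argument will have to pass from the pointwise Laplace asymptotic for $\sigma_{m,n}$ to a rank-one-plus-small-perturbation description of $\mathcal{A}_N\mathcal{A}_N^{T}$ near its top eigenvalue, so that the extremal $v_n\propto\rho^n$ accounts for all the mass up to a $1+o(1)$ factor.
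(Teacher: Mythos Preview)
Your strategy coincides with the paper's, which is the Widom--Wilf argument: Laplace asymptotics for $\sigma_{m,n}$ at $\theta=\pi$, followed by a rank-one-plus-small-perturbation analysis of $\mathcal{A}_N\mathcal{A}_N^{T}$.

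Two points where your write-up departs from what actually works. First, a sign slip: by (\ref{2.4}) one has $\zeta(-1)=-(1+\sqrt2)^{2}$, not $(1+\sqrt2)^{2}$. The rank-one matrix is $\rho_{m,n}=(-1)^{m-n}\eta^{m+n}$ with $\eta:=(1+\sqrt2)^{2}$, and its eigenvector is $v_n=(-\eta)^{n}$; with the positive choice $v_n=\eta^{n}$ the sum $\sum_n v_n\,\zeta(-1)^{n}$ alternates at the peak and you would not recover the sharp constant in the lower bound. Your algebraic identity should accordingly read $\eta\cdot 2^{5/2}=\eta^{2}-1$.

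Second, the paper carries out the upper bound exactly as your last paragraph anticipates, so the separate Cauchy--Schwarz/variational step you sketch is not needed. One writes
\[
\tau_{m,n}:=\sigma_{m,n}-\frac{\psi(\alpha,\beta)}{\sqrt{2N}}\,(-1)^{m-n}\eta^{m+n}
\]
and bounds $\|\tau\|_{\mathrm{op}}$ by the Frobenius norm $\bigl(\sum_{m,n}\tau_{m,n}^{2}\bigr)^{1/2}$: on the corner block $N-N_{\varepsilon}\le m,n\le N$ the two-sided asymptotic for $\sigma_{m,n}$ yields $|\tau_{m,n}|\le\varepsilon\,\eta^{m+n}/\sqrt{N}$, while on the complement Lemma~\ref{lem2} gives $|\tau_{m,n}|\le C\,\eta^{m+n}/\sqrt{m+n+1}$; summing the squares shows $\|\tau\|_{\mathrm{op}}=o\bigl(\eta^{2N}/\sqrt{N}\bigr)$. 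Since $(\rho_{m,n})$ has rank one, its sole nonzero eigenvalue equals its trace $\sum_{m=0}^{N}\eta^{2m}\sim\eta^{2N+2}/(\eta^{2}-1)$, and the perturbation inequality $\bigl|\lambda_N^{-1}-\tfrac{\psi}{\sqrt{2N}}q_N\bigr|\le\|\tau\|_{\mathrm{op}}$ then delivers both the upper and lower bounds for $\lambda_N^{-1}$ at once.
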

\begin{proof}
Based on the discussion in Theorem \ref{th1} and Lemma \ref{lem2}, we find
\begin{equation*}
\begin{split}
\sigma_{m,n}   =& \frac{1}{2\pi}\int_{-\varepsilon}^{\varepsilon}P_{m}\left({\rm e}^{{\rm i}\theta}\right)P_{n}\left({\rm e}^{-{\rm i}\theta}\right)d\theta+\frac{1}{2\pi}\int_{\varepsilon}^{2\pi-\varepsilon}P_{m}\left({\rm e}^{{\rm i}\theta}\right)P_{n}\left({\rm e}^{-{\rm i}\theta}\right)d\theta\\
         =&\frac{1}{2\pi^{2}}\int_{\varepsilon}^{2\pi-\varepsilon}\left|\zeta\left({\rm e}^{{\rm i}\theta}\right)\right|^{m+n}\left[{\rm sgn}~\zeta\left({\rm e}^{{\rm i}\theta}\right)\right]^{m-n}\left|A\left(\zeta\left({\rm e}^{{\rm i}\theta}\right)\right)\right|^{2}
         d\theta\\
         &+o\left(\frac{\left|\zeta(-1)\right|^{m+n}}{\sqrt{m+n}}\right), ~(m,n\rightarrow+\infty)
\end{split}
\end{equation*}
where ${\rm sgn}(z):=\frac{z}{|z|},z\in\mathbb{C}$. It should now be easy to determine the asymptotic behavior of the entries, $\{\sigma_{m,n}\}_{m,n=0}^{N}$, as $m,n\rightarrow+\infty$ with $m-n$ bounded. We know that the maximum of $\left|\zeta\left({\rm e}^{{\rm i}\theta}\right)\right|$ occurs at $\theta=\pi$, and by the Laplace method for asymptotic expansion of an integral, combined with Lemma 2.1., we get
\begin{equation}\label{2.11}
\begin{split}
\sigma_{m,n}&\simeq\frac{(-1)^{m-n}\left|A\left(\zeta\left(-1\right)\right)\right|^{2}\left|\zeta\left(-1\right)\right|
^{m+n}}{\sqrt[4]{2}\sqrt{\pi^{3}}\sqrt{m+n}}\\
&=(-1)^{m-n}2^{-\frac{1}{4}}\pi^{-\frac{3}{2}}\left|A\left(\zeta\left(-1\right)\right)\right|^{2}\left(1+\sqrt{2}\right)^{2(m+n)}(m+n)^{-\frac{1}{2}}.
\end{split}
\end{equation}
where $m,n\rightarrow+\infty$ with $m-n$ bounded.

We will now find the behavior of the eigenvalue $\lambda_{N}$, for large $N$.
\\
Let
\begin{equation*}
g(\theta):=\left|\zeta\left({\rm e}^{{\rm i}\theta}\right)\right|~~{\rm and}~~ \psi(\alpha,\beta):=\frac{\left|A\left(\zeta\left(-1\right)\right)\right|^{2}\sqrt{\eta}}{\sqrt{2}\sqrt{\pi^{3}}\sqrt{\left|
g''(\pi)\right|}}.
\end{equation*}
From (\ref{2.4}), we can get
\begin{equation*}
\zeta(-1)=-\left(1+\sqrt{2}\right)^{2}~~~{\rm and}~~~\frac{d^{2}g(\theta)}{d\theta^{2}}\Big|_{\theta=\pi}=-\frac{1}{2}-\frac{3\sqrt{2}}{8} .
\end{equation*}
Hence, from (\ref{2.24}), and an easy computation gives
\begin{equation}\label{2.28}
\begin{split}
\psi(\alpha,\beta)&=2^{-\frac{3}{4}}\pi^{-\frac{3}{2}}\left(1+2^{\frac{1}{2}}\right)^{2\alpha+2}\left(1+2^{-\frac{1}{2}}\right)^{2\beta}.
\end{split}
\end{equation}

For the sake of the completeness, we give the standard method, following closely in the
footsteps Widom and Wilf's [\cite{C6}, P. 342, Thm.], applied to our weight function $w(x)$.
We define,
\begin{equation}\label{2.12}
\rho_{m,n}:=(-1)^{m-n}\eta^{m+n},~~\eta=\left(1+\sqrt{2}\right)^{2},
\end{equation}
and
\begin{equation}\label{2.13}
\tau_{m,n}:=\sigma_{m,n}-\left(\frac{\psi(\alpha,\beta)}{\sqrt{2N}}\right)\rho_{m,n}.
\end{equation}

Fixing an $\varepsilon$ and a sufficiently large $N_{\varepsilon}$. It follows from (\ref{2.11})
that if $m$ and $n$ are sufficiently large, but $|m-n|\leq N_{\varepsilon}$, we shall have
\begin{equation*}
\left|\sigma_{m,n}-\psi(\alpha,\beta)\frac{(-1)^{m-n}\eta^{m+n}}{\sqrt{m+n}}\right|\leq
\varepsilon\frac{\eta^{m+n}}{\sqrt{m+n}}.
\end{equation*}

Therefore if $N$ is sufficiently large and much larger than $N_{\varepsilon}$ with $N-N_{\varepsilon}\leq m,n\leq N$,
\begin{equation}\label{2.14}
\begin{split}
\left|\tau_{m,n}\right|          &  =\left|\sigma_{m,n}-\psi(\alpha,\beta) \frac{(-1)^{m-n}\eta^{m+n}}{\sqrt{2N}}\right|\\
                  & \leq\left|\sigma_{m,n}-\psi(\alpha,\beta)\frac{(-1)^{m-n}\eta^{m+n}}{\sqrt{m+n}}\right|+\psi(\alpha,\beta)
                  \eta^{m+n}\left[\frac{1}{\sqrt{m+n}}-\frac{1}{\sqrt{2N}}\right]\\
                  & \leq\varepsilon_{1}\frac{\eta^{m+n}}{\sqrt{2N-2N_{\varepsilon}}}+\varepsilon_{2}\frac{\eta^{m+n}}{\sqrt{2N-2N_{\varepsilon}}}\\  &\leq\varepsilon\frac{\eta^{m+n}}{\sqrt{N}}=\frac{\left(1+\sqrt{2}\right)^{2(m+n)}\varepsilon}{\sqrt{N}}.
\end{split}
\end{equation}
where $\varepsilon_{1}$, $\varepsilon_{2}$ are arbitrary small.

It follows from Lemma \ref{lem2} that for all $m$, $n$,
\begin{equation}\label{2.15}
\left|\tau_{m,n}\right|\leq C_{1}\frac{\left(1+\sqrt{2}\right)^{2(m+n)}}{\sqrt{m+n+1}},
\end{equation}
where $C_{1}$ is a constant. Denote by $p_{N}$ the maximum modulus of the eigenvalues of $\left(\tau_{m,n}\right)_{m,n=0}^{N}$. Then from (\ref{2.14}) and (\ref{2.15}) we obtain
\begin{equation*}
\begin{split}
p_{N} ^{2}          &  \leq\sum_{m,n=0}^{N}\tau_{m,n}^{2}\\
&\leq\frac{\varepsilon^{2}}{N}\sum_{m,n=N-N_{\varepsilon}}^{N}\eta^{2(m+n)}+\left(\sum_{m=N-N_{\varepsilon}}^{N}\sum_{n=0}^{N-N_{\varepsilon}}+
\sum_{m=0}^{N-N_{\varepsilon}}\sum_{n=0}^{N}\right)
\frac{C_{1}^{2}\eta^{2(m+n)}}{m+n+1}\\
&\leq\frac{\varepsilon^{2}\eta^{4N+4}}{\left(\eta^{2}-1\right)^{2}N}
+C_{2}\frac{\eta^{2(2N-N_{\varepsilon})}}{2N-N_{\varepsilon}},
\end{split}
\end{equation*}
where $C_{2}$ is another constant. Assuming $N_{\varepsilon}$ to be sufficiently large in comparison to $\varepsilon$, this will simply for large enough $N$ to $N_{\varepsilon}$
\begin{equation}\label{2.16}
\left|p_{N}\right|\leq\left(\frac{\varepsilon^{2}\eta^{4N+4}}{\left(\eta^{2}-1\right)^{2}N}+\frac{3\varepsilon^{2}\eta^{4N+4}}{\left(\eta^{2}-1\right)^{2}N}\right)^{\frac{1}{2}}
\leq\frac{2\varepsilon\eta^{2N+2}}{\left(\eta^{2}-1\right)\sqrt{N}}=\frac{\left(1+\sqrt{2}\right)^{4N+2}\varepsilon}
{2\sqrt{2}\sqrt{N}}.
\end{equation}
Let $\lambda_{N}^{-1}$ denote the largest eigenvalue of the matrix $\left(\sigma_{m,n}\right)_{m,n=0}^{N}$. It follows from (\ref{2.12}) , (\ref{2.13}) and (\ref{2.16}) that if $q_{N}$ is the largest eigenvalue of the matrix $\left(\rho_{m,n}\right)_{m,n=0}^{N}$, then
\begin{equation*}
\frac{q_{N}\psi(\alpha,\beta)}{\sqrt{2N}}-p_{N}\leq\frac{1}{\lambda_{N}}\leq\frac{q_{N}\psi(\alpha,\beta)}{
\sqrt{2N}}+p_{N}~,
\end{equation*}
and so for sufficiently large $N$, we have
\begin{equation*}
\frac{q_{N}\psi(\alpha,\beta)}{\sqrt{2N}}-\frac{2\varepsilon\eta^{2N+2}}{\left(\eta^{2}-1
\right)\sqrt{N}}\leq\frac{1}{\lambda_{N}}\leq\frac{q_{N}\psi(\alpha,\beta)}{\sqrt{2N}}+\frac{2\varepsilon
\eta^{2N+2}}{\left(\eta^{2}-1\right)\sqrt{N}}~,
\end{equation*}
where $\eta=\left(1+\sqrt{2}\right)^{2}$ and $\psi(\alpha,\beta)=2^{-\frac{3}{4}}\pi^{-\frac{3}{2}}\left(1+2^{\frac{1}{2}}\right)^{2\alpha+2}\left(1+2^{-\frac{1}{2}}\right)^{2\beta}$.

But $\left(\rho_{m,n}\right)_{m,n=0}^{N}$ is of rank $1$, so its only nonzero eigenvalue (and certainly the largest) is given by the trace of $\left(\rho_{m,n}\right)_{m,n=0}^{N}$, i.e.
\begin{equation*}
q_{N}=\sum_{m=0}^{N}\eta^{2m}=\frac{\eta^{2N+2}-1}{\eta^{2}-1}\simeq \frac{\left(1+\sqrt{2}\right)^{4N+2}}{4\sqrt{2}}.
\end{equation*}
Hence,
\begin{equation*}
\begin{split}
\lambda_{N}&\simeq\frac{\sqrt{2N}}{q_{N}\psi(\alpha,\beta)}\simeq\frac{\left(\eta^{2}-1\right)\sqrt{2N}}{\psi(\alpha,\beta)\eta^{2N+2}}\\
&=2^{\frac{15}{4}}\pi^{\frac{3}{2}}\left(1+2^{\frac{1}{2}}\right)^{-2\alpha}\left(1+2^{-\frac{1}{2}}\right)^{-2\beta}N^{\frac{1}{2}}\left(1+2^{\frac{1}{2}}\right)^{-4\left(N+1\right)}.
\end{split}
\end{equation*}
\end{proof}
\begin{rem}
Putting $\alpha=\beta=0$, Szeg\"{o}'s classical result [\cite{C3}] for the weight function $w(x)=1$ is recovered:
\begin{equation*}
\lambda_{N}\simeq 2^{\frac{15}{4}}\pi^{\frac{3}{2}}N^{\frac{1}{2}}\left(\sqrt{2}-1\right)^{4N+4}.
\end{equation*}
\end{rem}
\section{Comparing with numerical results}
It is well known that Hankel matrices (moment matrices) of this form are extremely ill-conditioned.  This can also been seen from the
analytic formula, where the dominant term of $\lambda_N$ is $\big(1 + \sqrt{2}\big)^{-4(N+1)}$.  Due to the ill-conditioned nature of
these matrices, standard eigensolver packages based on double precision floating values can only solve for small values of $N$, for example $N<20$, before the available precision is exhausted (53 bit in the mantissa, 11 bits in the exponent).

In [\cite{C11}], Emmart, Chen and Weems developed an efficient parallel algorithm based on arbitrary precision arithmetic and the Secant
method that can handle the extreme ill-conditioning and we employ their algorithms here for our numeric computations.  We use the
numeric results to test the convergence of our asymptotic formulas to the actual smallest eigenvalues for various $N$ and several values
of the parameters $\alpha$ and $\beta$.   Even with efficient software, the computation times for the largest size, $N=1000$, require
almost 10 hours of CPU time on a modern Core i7 processor.

Tables 1-3 give samples of numerical $\lambda_{N}$ compared with theoretical $\lambda_{N}$. The errors of $\lambda_{N}$ for some special choices of $\alpha$ and $\beta$ are illustrated by Figures 1-4.

\begin{table}[H]
\centering
\caption{Numerical vs. theoretical values of $\lambda_{N}$ for $\alpha=\beta=-\frac{1}{2}$, see Example 2.2.}

\begin{tabular}{||c|c|c|c||}
\hline
Size $N+1$& Numerical $\lambda_{N}$  & Theoretical $\lambda_{N}$ & error\\
\hline
25 & $8.0295\times10^{-36}$ &$ 7.9829\times10^{-36}$& $0.5804\%$\\
 50 & $6.0370\times10^{-74}$ &$ 6.0198\times10^{-74}$& $0.2849\%$\\
 100&$2.3866\times10^{-150}$ & $ 2.3832\times10^{-150}$& $ 0.1409\%$\\
 150&$8.1510\times10^{-227}$ & $ 8.1434\times10^{-227}$& $ 0.0932\%$\\
 200 & $2.6230\times10^{-303}$& $ 2.6212\times10^{-303}$& $ 0.0701\%$\\
 250 & $8.1711\times10^{-380}$& $ 8.1665\times10^{-380}$& $ 0.0563\%$\\
 300 & $2.4937\times10^{-456}$& $ 2.4925\times10^{-456}$& $ 0.0467\%$\\
 350 & $7.5033\times10^{-533}$& $ 7.5003\times10^{-533}$& $ 0.0400\%$\\
 400 & $2.2344\times10^{-609}$& $ 2.2336\times10^{-609}$& $ 0.0350\%$\\
 450 & $6.6016\times10^{-686}$& $ 6.5995\times10^{-686}$& $ 0.0318\%$\\
 500 & $ 1.9383\times10^{-762}$ & $ 1.9378\times10^{-762}$& $ 0.0280\%$\\
 550 & $ 5.6626\times10^{-839}$ & $ 5.6611\times10^{-839}$& $ 0.0265\%$\\
 600 & $ 1.6474\times10^{-915}$ & $ 1.6470\times10^{-915}$& $ 0.0233\%$\\
700 & $ 1.3805\times10^{-1068}$ & $ 1.3802\times10^{-1068}$& $ 0.0200\%$\\
800 & $ 1.1449\times10^{-1221}$ & $ 1.1447\times10^{-1221}$& $ 0.0175\%$\\
900 & $ 9.4211\times10^{-1375}$ & $ 9.4197\times10^{-1375}$& $ 0.0155\%$\\
1000 & $ 7.7042\times10^{-1528}$ & $ 7.7031\times10^{-1528}$& $ 0.0140\%$\\
\hline
\end{tabular}
\end{table}

\begin{figure}[H]
\centering
\includegraphics[width=0.87\textwidth]{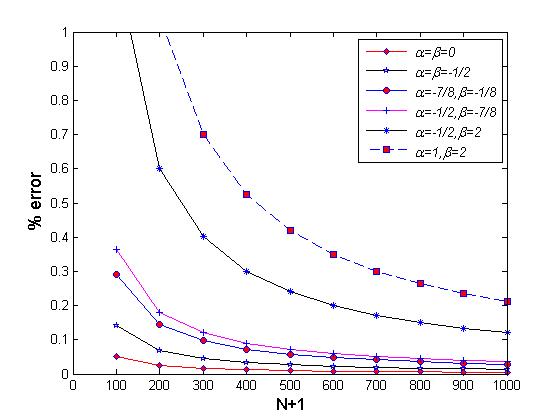}
\caption{The percentage error of the theoretical values of $\lambda_{N}$ when compared with those obtained
numerically, for various $\alpha$ and $\beta$. }
\end{figure}
\begin{figure}[H]
\centering
\includegraphics[width=0.87\textwidth]{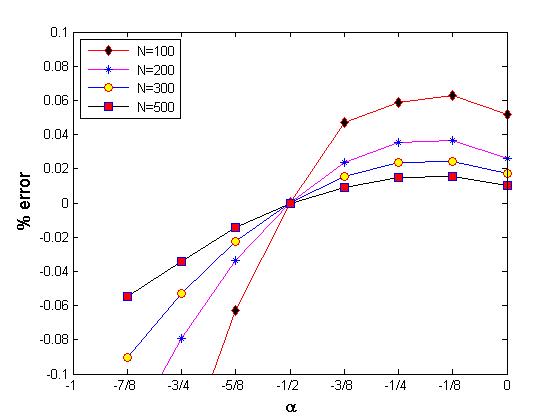}
\caption{The percentage error of the theoretical values of $\lambda_{N}$ when compared with those obtained
numerically, for different values of $N$ with $-1<\alpha<0,\beta=0$.}
\end{figure}

\begin{figure}[H]
\centering
\includegraphics[width=0.87\textwidth]{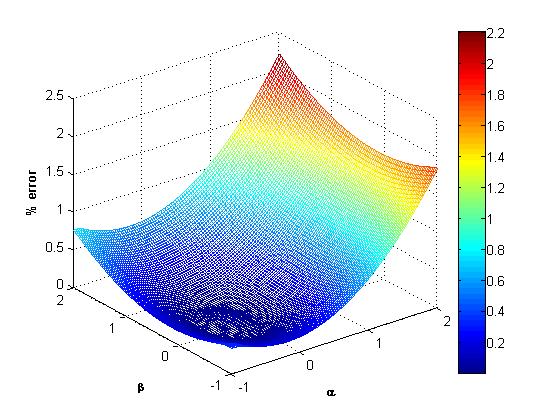}
\caption{The percentage error of the theoretical values of $\lambda_{N}, N+1=200$ when compared with those obtained
numerically, for $-1<\alpha\leq2,-1<\beta\leq2$.}
\end{figure}
\begin{figure}[H]
\centering
\includegraphics[width=0.85\textwidth]{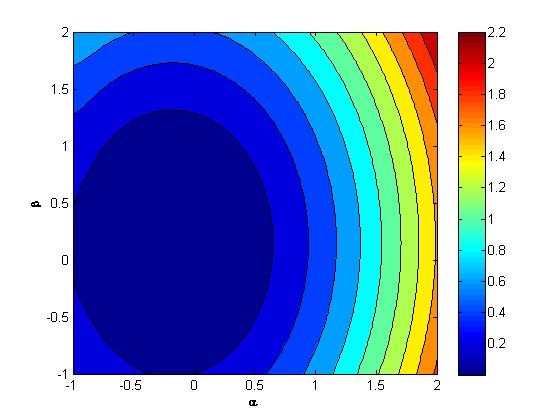}
\caption{The contour graph of the Figure $3$.}
\end{figure}

In the following, we will present some of our other numerical results.

\begin{table}[H]
\centering
\caption{List of numerical results compared with theoretical values}
\begin{tabular}{|cccllr|cccllr|}
\hline
$\alpha$        &$\beta$              &Size $N+1$      & Numerical $\lambda_{N}$ & Theoretical $\lambda_{N}$ & error\\  \hline
$ -\frac{7}{8} $&$   -\frac{1}{8}    $&  100         &$3.0997\times10^{-150}     $&$3.0907\times10^{-150}      $& $0.2911\%$\\
                &                     &  200         &$3.4042\times10^{-303}     $&$3.3993\times10^{-303}      $& $0.1449\%$\\
                &                     &  300         &$3.2355\times10^{-456}     $&$3.2324\times10^{-456}      $& $0.0965\%$\\
                &                     &  400         &$2.8988\times10^{-609}     $&$2.8967\times10^{-609}      $& $0.0723\%$\\
                &                     &  500         &$2.5144\times10^{-762}     $&$2.5130\times10^{-762}      $& $0.0578\%$\\
                &                     &  600         &$2.1369\times10^{-915}     $&$2.1359\times10^{-915}      $& $0.0482\%$\\
                &                     &  700         &$1.7906\times10^{-1068}    $&$1.7899\times10^{-1068}     $& $0.0413\%$\\
                &                     &  800         &$1.4851\times10^{-1221}     $&$1.4845\times10^{-1221}      $& $0.0361\%$\\
                &                     &  900         &$1.2220\times10^{-1374}     $&$1.2216\times10^{-1374}      $& $0.0321\%$\\
                &                     & 1000         &$9.9926\times10^{-1528}     $&$9.9897\times10^{-1528}      $& $0.0289\%$\\
$ -\frac{1}{2} $&$   -\frac{7}{8}    $&  100         &$3.5723\times10^{-150}     $&$3.5593\times10^{-150}      $& $0.3640\%$\\
                &                     &  200         &$3.9218\times10^{-303}     $&$3.9147\times10^{-303}      $& $0.1812\%$\\
                &                     &  300         &$3.7270\times10^{-456}     $&$3.7225\times10^{-456}      $& $0.1206\%$\\
                &                     &  400         &$3.3389\times10^{-609}     $&$3.3359\times10^{-609}      $& $0.0904\%$\\
                &                     &  500         &$2.8961\times10^{-762}     $&$2.8940\times10^{-762}      $& $0.0723\%$\\
                &                     &  600         &$2.4612\times10^{-915}     $&$2.4597\times10^{-915}      $& $0.0602\%$\\
                &                     &  700         &$2.0624\times10^{-1068}     $&$2.0613\times10^{-1068}      $& $0.0516\%$\\
                &                     &  800         &$1.7104\times10^{-1221}     $&$1.7096\times10^{-1221}      $& $0.0451\%$\\
                &                     &  900         &$1.4074\times10^{-1374}     $&$1.4068\times10^{-1374}      $& $0.0401\%$\\
                &                     & 1000         &$1.1508\times10^{-1527}     $&$1.1504\times10^{-1527}      $& $0.0361\%$\\
$ -\frac{1}{2} $&$         2         $&  100         &$1.6637\times10^{-151}     $&$1.6439\times10^{-151}      $& $1.1930\%$\\
                &                     &  200         &$1.8189\times10^{-304}     $&$1.8080\times10^{-304}      $& $0.5998\%$\\
                &                     &  300         &$1.7261\times10^{-457}     $&$1.7192\times10^{-457}      $& $0.4006\%$\\
                &                     &  400         &$1.5453\times10^{-610}     $&$1.5407\times10^{-610}      $& $0.3008\%$\\
                &                     &  500         &$1.3398\times10^{-763}     $&$1.3366\times10^{-763}      $& $0.2407\%$\\
                &                     &  600         &$1.1383\times10^{-916}     $&$1.1360\times10^{-916}      $& $0.2007\%$\\
                &                     &  700         &$9.5365\times10^{-1070}     $&$9.5201\times10^{-1070}      $& $0.1721\%$\\
                &                     &  800         &$7.9078\times10^{-1223}     $&$7.8959\times10^{-1223}      $& $0.1506\%$\\
                &                     &  900         &$6.5060\times10^{-1376}     $&$6.4973\times10^{-1376}      $& $0.1339\%$\\
                &                     & 1000         &$5.3197\times10^{-1529}     $&$5.3133\times10^{-1529}      $& $0.1205\%$\\
$       2      $&$   -\frac{1}{2}    $&  100         &$3.0068\times10^{-152}     $&$2.9060\times10^{-152}      $& $3.3526\%$\\
                &                     &  200         &$3.2511\times10^{-305}     $&$3.1961\times10^{-305}      $& $1.6921\%$\\
                &                     &  300         &$3.0740\times10^{-458}     $&$3.0392\times10^{-458}      $& $1.1316\%$\\
                &                     &  400         &$2.7469\times10^{-611}     $&$2.7235\times10^{-611}      $& $0.8500\%$\\
                &                     &  500         &$2.3790\times10^{-764}     $&$2.3628\times10^{-764}      $& $0.6807\%$\\
                &                     &  600         &$2.0197\times10^{-917}     $&$2.0082\times10^{-917}      $& $0.5676\%$\\
                &                     &  700         &$1.6911\times10^{-1070}     $&$1.6829\times10^{-1070}      $& $0.4867\%$\\
                &                     &  800         &$1.4018\times10^{-1223}     $&$1.3958\times10^{-1223}      $& $0.4260\%$\\
                &                     &  900         &$1.1529\times10^{-1376}     $&$1.1486\times10^{-1376}      $& $0.3788\%$\\
                &                     & 1000         &$9.4248\times10^{-1530}     $&$9.3926\times10^{-1530}      $& $0.3410\%$\\
\hline
\end{tabular}
\end{table}

\begin{table}[H]
\centering
\caption{List of numerical results compared with theoretical values}
\begin{tabular}{|cccllr|cccllr|}
\hline
$\alpha$        &$\beta$              &Size $N+1$      & Numerical $\lambda_{N}$ & Theoretical $\lambda_{N}$    & error     \\  \hline
$      0       $&$         0         $&  100         &$5.7797\times10^{-151}     $&$5.7827\times10^{-151}      $& $0.0519\%$\\
                &                     &  200         &$6.3585\times10^{-304}     $&$6.3601\times10^{-304}      $& $0.0261\%$\\
                &                     &  300         &$6.0468\times10^{-457}     $&$6.0478\times10^{-457}      $& $0.0174\%$\\
                &                     &  400         &$5.4190\times10^{-610}     $&$5.4197\times10^{-610}      $& $0.0131\%$\\
                &                     &  500         &$4.7013\times10^{-763}     $&$4.7018\times10^{-763}      $& $0.0105\%$\\
                &                     &  600         &$3.9959\times10^{-916}     $&$3.9963\times10^{-916}      $& $0.0087\%$\\
                &                     &  700         &$3.3487\times10^{-1069}     $&$3.3489\times10^{-1069}      $& $0.0075\%$\\
                &                     &  800         &$2.7774\times10^{-1222}     $&$2.7776\times10^{-1222}      $& $0.0065\%$\\
                &                     &  900         &$2.2855\times10^{-1375}     $&$2.2856\times10^{-1375}      $& $0.0058\%$\\
                &                     & 1000         &$1.8690\times10^{-1528}     $&$1.8691\times10^{-1528}      $& $0.0052\%$\\
$      1       $&$         2         $&  100         &$1.1930\times10^{-152}     $&$1.1683\times10^{-152}      $& $2.0716\%$\\
                &                     &  200         &$1.2985\times10^{-305}     $&$1.2849\times10^{-305}      $& $1.0458\%$\\
                &                     &  300         &$1.2304\times10^{-458}     $&$1.2218\times10^{-458}      $& $0.6994\%$\\
                &                     &  400         &$1.1101\times10^{-611}     $&$1.0949\times10^{-611}      $& $0.5254\%$\\
                &                     &  500         &$9.5390\times10^{-765}     $&$9.4989\times10^{-765}      $& $0.4207\%$\\
                &                     &  600         &$8.1019\times10^{-918}     $&$8.0735\times10^{-918}      $& $0.3508\%$\\
                &                     &  700         &$6.7861\times10^{-1071}     $&$6.7657\times10^{-1071}      $& $0.3009\%$\\
                &                     &  800         &$5.6262\times10^{-1224}     $&$5.6114\times10^{-1224}      $& $0.2633\%$\\
                &                     &  900         &$4.6283\times10^{-1377}     $&$4.6175\times10^{-1377}      $& $0.2341\%$\\
                &                     & 1000         &$3.7840\times10^{-1530}     $&$3.7760\times10^{-1530}      $& $0.2108\%$\\
$     2        $&$         1         $&  100         &$6.0492\times10^{-153}     $&$5.8413\times10^{-153}      $& $3.4369\%$\\
                &                     &  200         &$6.5383\times10^{-306}     $&$6.4245\times10^{-306}      $& $1.7405\%$\\
                &                     &  300         &$6.1811\times10^{-459}     $&$6.1091\times10^{-459}      $& $1.1653\%$\\
                &                     &  400         &$5.5230\times10^{-612}     $&$5.4746\times10^{-612}      $& $0.8758\%$\\
                &                     &  500         &$4.7830\times10^{-765}     $&$4.7494\times10^{-765}      $& $0.7016\%$\\
                &                     &  600         &$4.0605\times10^{-918}     $&$4.0367\times10^{-918}      $& $0.5851\%$\\
                &                     &  700         &$3.3999\times10^{-1071}     $&$3.3829\times10^{-1071}      $& $0.5019\%$\\
                &                     &  800         &$2.8181\times10^{-1224}     $&$2.8057\times10^{-1224}      $& $0.4393\%$\\
                &                     &  900         &$2.3178\times10^{-1377}     $&$2.3087\times10^{-1377}      $& $0.3907\%$\\
                &                     & 1000         &$1.8947\times10^{-1530}     $&$1.8880\times10^{-1530}      $& $0.3517\%$\\
$      10      $&$        10         $&  100         &$7.7362\times10^{-163}     $&$2.8934\times10^{-163}      $& $62.5995\%$\\
                &                     &  200         &$5.3174\times10^{-316}     $&$3.1823\times10^{-316}      $& $40.1536\%$\\
                &                     &  300         &$4.2830\times10^{-469}     $&$3.0260\times10^{-469}      $& $29.3481\%$\\
                &                     &  400         &$3.5259\times10^{-622}     $&$2.7118\times10^{-622}      $& $23.0900\%$\\
                &                     &  500         &$2.9052\times10^{-775}     $&$2.3526\times10^{-775}      $& $19.0220\%$\\
                &                     &  600         &$2.3852\times10^{-928}     $&$1.9995\times10^{-928}      $& $16.1690\%$\\
                &                     &  700         &$1.9497\times10^{-1081}     $&$1.6756\times10^{-1081}      $& $14.0587\%$\\
                &                     &  800         &$1.5871\times10^{-1234}     $&$1.3898\times10^{-1234}      $& $12.4348\%$\\
                &                     &  900         &$1.2871\times10^{-1387}     $&$1.1436\times10^{-1387}      $& $11.1468\%$\\
                &                     & 1000         &$1.0403\times10^{-1540}     $&$9.3519\times10^{-1541}      $& $10.1004\%$\\
\hline
\end{tabular}
\end{table}

\newpage

From Tables 1-3 and Figures $1,2$, We note that the Theoretical $\lambda_{N}$ is slightly less than the Numerical $\lambda_{N}$ for the the cases  $\alpha\beta>0$ and $\alpha\beta<0$. And the other way around for $\alpha=\beta=0$. What's more, for the case $\alpha<0,\beta=0$, we can find an interesting point $\alpha=-\frac{1}{2},\beta=0$ where the errors for virous $N$ approximate to zero, see Figure $2$.

\begin{rem}\label{rem1}
In Figure 2,
\begin{equation}\label{2.29}
\%~ error=\frac{Theoretical~\lambda_{N}-~Numerical~\lambda_{N}}{Numerical~\lambda_{N}}\times100 ,
\end{equation}
and the values $\%~ error$ in Figure 1,3,4 and Table 1-3 are given by the absolute values of (\ref{2.29}).
\end{rem}

\begin{rem}\label{rem1}
Figure 3 and 4 (given by 576 points) show that the contour lines of the error have elliptical shapes, because the value of the integral (b) is approximately double the value of (c), which can be seen from (2.14) and (2.15).
\end{rem}
\section{Appendix}
The integrals identities listed below, can be found in [\cite{new1}] and [\cite{new2}].

For $t\geq 1$,
\begin{equation}
\int_{0}^{1}\frac{dx}{(x+t)\sqrt{x(1-x)}}=
\frac{\pi}{\sqrt{t(t+1)}}.
\end{equation}
\begin{equation}
\int_{0}^{1}\frac{\ln(1-x)}{(x+t)\sqrt{x(1-x)}}dx=
\frac{\pi\ln \left[\frac{t+1}{\left(\sqrt{t}+\sqrt{t+1}\right)^{2}}\right]}{\sqrt{t(t+1)}}.
\end{equation}

For $t<-1$ (here, we take $\sqrt{c}=\rm{i}\sqrt{-c}, c<0$, with $\rm{i}^{2}=-1$),
\begin{equation}
\int_{0}^{1}\frac{dx}{(x+t)\sqrt{x(1-x)}}=
-\frac{\pi}{\sqrt{t(t+1)}}.
\end{equation}
\begin{equation}
\int_{0}^{1}\frac{\ln(1-x)}{(x+t)\sqrt{x(1-x)}}dx=
-\frac{\pi\ln \left[\frac{t+1}{\left(\sqrt{t}+\sqrt{t+1}\right)^{2}}\right]}{\sqrt{t(t+1)}}.
\end{equation}

\section{Acknowledgements}
The financial support of the Macau Science and Technology Development Fund under grant
number FDCT 130/2014/A3 and FDCT 023/2017/A1 are gratefully acknowledged. We would also like
to thank the National Science Foundation (NSF): CCF-1525754 and the University of Macau for generous support: MYRG 2014-00011 FST, MYRG
2014-00004 FST.

\end{document}